\newtheorem{lem}{Lemma}
\newtheorem{defn}{Definition}
\newtheorem{rem}{Remark}
\title{\LARGE \bf
Average Cost Optimal Control of Stochastic Systems Using Reinforcement Learning}
\author{Jing~Lai, \emph{IEEE Student Member}, Junlin~Xiong,  \emph{IEEE Member}
\thanks{This work was supported by National Natural Science Foundation of China under Grant 61773357}
\thanks{Jing. Lai and Junlin. Xiong are with the Department of Automation, University of Science and Technology of China, Hefei 230026, China. (E-mail: \tt\small lj120@mail.ustc.edu.cn, xiong77@ustc.edu.cn).}
}
\begin{document}

\maketitle
\thispagestyle{empty}
\pagestyle{empty}

\begin{abstract}
This paper addresses the average cost minimization problem for discrete-time systems with multiplicative and additive noises via reinforcement learning.
By using Q-function, we propose an online learning scheme to estimate the kernel matrix of Q-function and to update the control gain using the data along the system trajectories.
The obtained control gain and kernel matrix are proved to converge to the optimal ones.
To implement the proposed learning scheme, an online model-free reinforcement learning algorithm is given, where recursive least squares method is used to estimate the kernel matrix of Q-function.
A numerical example is presented to illustrate the proposed approach.
\end{abstract}

 \section{Introduction}
Reinforcement learning (RL) \cite{sutton2018reinforcement} has been widely applied for solving optimizing problems in uncertain environments.
RL has performed impressively in many challenging tasks including playing Atari games \cite{mnih2015human,pmlr-v48-mniha16}, playing video game Doom \cite{jiang2015self} and has been extensively studied to
control dynamical systems \cite{bucsoniu2018reinforcement,vamvoudakis2017game}.
Recently, extensions to RL-based control schemes for stochastic systems have sprung up as well \cite{recht2019tour,8386658,9029904,9096367}.

As system parameter uncertainties are often modeled as multiplicative noises \cite{boyd1994linear,9115001} and some external disturbances are modeled as additive noises \cite{bian2019continuous}, stochastic systems subjected to multiplicative and additive noises have been studied extensively  by taking advantage of RL methods \cite{todorov2002optimal,9115001,ZHANG20201}.
For systems with additive noises, the authors of \cite{pmlr-v89-abbasi-yadkori19a} evaluated the value function first and then  Q-function. The control policies were updated with respect to the average of all previous Q-function estimations.
Compared with \cite{pmlr-v89-abbasi-yadkori19a}, the authors of \cite{9029904} developed a model-free RL algorithm for the stochastic  linear quadratic regulator (LQR) problem with additive noises in both system states and system measurements.
In \cite{9096367}, RL algorithms were proposed to solve a class of coupled algebraic Riccati equations for linear stochastic dynamics and to minimize the variance of the cost function.
For systems with multiplicative noises, the authors of \cite{WANG20181} presented the value iteration learning algorithm to find the optimal control policy where system matrices were partially needed to implement the algorithm.
In \cite{9115001}, policy iteration were employed to solve a zero-sum dynamic linear quadratic game where systems were subjected to multiplicative noise terms.
For systems suffering from both multiplicative and additive noises, a RL-based model-free control methodology was proposed to solve the discount-optimal control problem for continuous-time linear stochastic systems in \cite{7260102}.
Motivated by \cite{7260102}, the authors of \cite{ZHANG20201} considered the adaptive optimal control problem for stochastic systems with
complementally unmeasurable states.
Based on the separation principle, a data-driven optimal observer and an off-policy data-driven RL algorithm were combined to yield the optimal control policy without the knowledge of the system matrices.

In this paper, the average cost optimal control problem is investigated for a class of discrete-time stochastic systems subject to both multiplicative and additive noises.
This paper aims to find the optimal admissible control policy in the sense of minimizing the average expected cost.
Firstly, with the system matrices known, the control gain and kernel matrix of value function sequences are evaluated in an offline manner.
The obtained sequences  are proved to converge to the optimal control gain and to the solution to the stochastic algebra riccati equation (SARE), respectively.
Then, by using Q-function, we estimate the kernel matrix of Q-function and update the control gain in an online manner without the knowledge of the system matrices.
Finally, an RL-based algorithm is presented to implement the online learning scheme , where recursive least squares method is used to estimate the iterative kernel matrix of Q-function.
Our algorithm removes the assumption that the noises are measurable as needed in \cite{ZHANG20201},\cite{JZP2016noise}, where optimal control problems were considered  for continuous-time systems with multiplicative and additive noises.
Moreover, compared with \cite{7574391,8122054}, the judicious selection of the discount factor has been avoided.
A numerical example is presented to illustrate the obtained results.

\textit{Notation:}
The notation $\mathbb{R}^{n\times m}$ denotes the set of $n\times m$ real matrices.
For matrix or vector $X$, $X^\top$ denotes its transpose, and $\rho(X)$ denotes its spectral radius.
The notation $\|.\|$ denotes the $L_2$ norm for both matrices and vectors.
The trace of a square matrix $X$ is denoted by ${\rm tr}(X)$.
The notation $I$ denotes an identity matrix with appropriate dimensions, \underline{\textbf{1}} is a column vector with all of its elements being 1.
For matrix $X$, $X>0$ (resp. $X\geq 0$) means that $X^\top=X$ and $X$ is positive (resp. positive semi-) definite.
Let $\otimes$ denote the Kronecker product and ${\rm E}$ denote the mathematical expectation.
For symmetric matrix $X\in \mathbb{R}^{n\times n}$,
${\rm {vech}}(X)=[X_{11},X_{12},\ldots,X_{1n},X_{22},X_{23},
\ldots,X_{n-1,n},X_{nn} ]$, ${\rm {vecs}}(X)$ $=[X_{11},2X_{12},\ldots,2X_{1n},X_{22},2X_{23},$
$\ldots,2X_{n-1,n},X_{nn}]$, and both ${\rm {vech}}(X)$ and ${\rm {vech}}(X)$ are in the set of $\mathbb{R}^\frac{n(n+1)}{2}$ .

\section{problem description}
Consider the following discrete-time stochastic system
\begin{equation}  \label{sys1}
x_{k+1}=(A+\sum_{i=1}^{p}\alpha_{ik}A_i)x_k
+(B+\sum_{j=1}^{q}\beta_{jk}B_j)u_k+d_k
\end{equation}
where $x_k\in \mathbb{R}^{n}$ is the system state at time $k$,
$u_k\in \mathbb{R}^{m}$ is the control input,
$x_0$ is the system initial state, which is a Gaussian random vector with zero mean and covariance $X_0\geq0$.
The matrices $A,A_i\in \mathbb{R}^{n\times n}$, $B,B_j\in \mathbb{R}^{n\times m}$ are system matrices.
The system noise sequence $\{(\alpha_{ik},\beta_{jk},d_k): i=1,2,\ldots,p,~j=1,2,\ldots,q,~k=0,1,2,\ldots\}$
is defined on a given complete probability space $(\Omega, \mathcal{F}, \mathcal{P})$.
We assume that there exist linear admissible control policies for system \eqref{sys1}.
Furthermore, assume that: 1)multiplicative noises $\alpha_{ik}$ and $\beta_{jk}$ are scalar Gaussian random variables with zero means and covariances $\bar{\alpha}_i$ and $\bar{\beta}_j$, respectively; 2) additive noise $d_k$ is a Gaussian random vector with zero mean and covariance $D>0$; 3) $x_0,~\alpha_{ik},~\beta_{jk}$ and $d_k$ are mutually independent.

\begin{defn} \label{def1}
System \eqref{sys1} with control input $u_k\equiv0$ is called asymptotically square stationary (ASS) if there exists a matrix $X>0$ such that $\|\underset{k\rightarrow\infty}{\lim}{\rm E}(x_kx_k^{\top})-X\|=0$.
\end{defn}

\begin{defn}\label{def2}
A control policy is admissible if and only if the system followed the control policy is ASS.
\end{defn}

\begin{rem}\label{rem_Exx}
With an admissible control policy $u=Lx$,
we have
\begin{align}
\nonumber &{\rm E}(x_{k+1}x_{k+1}^{\top})=(A+BL){\rm E}(x_kx_k^{\top})(A+BL)^{\top}\\
\nonumber&~+\sum_{i=1}^p\bar{\alpha}_iA_i{\rm E}(x_kx_k^\top) A_i^\top+
\sum_{j=1}^q\bar{\beta}_jB_jL{\rm E}(x_kx_k^\top) L^\top B_j^\top+D.
\end{align}
Hence, ${\rm E}(x_kx_k^{\top})$ is positive definite.
\end{rem}

Define the value function associated with an admissible control policy $u=Lx$ as
\begin{equation}\label{vf3}
J(x_k)={\rm E}\sum_{t=k}^\infty(c(x_t,u_t)-\lambda),
\end{equation}
where the term $c(x_t,u_t)= x_t^{\top}Qx_t+u_t^{\top}Ru_t$ with $Q\geq 0,~R>0$ is the one step cost, and $\lambda$ is the average expected cost
\begin{equation}
\nonumber \lambda=\lim_{N\rightarrow \infty}\frac{1}{N}
{\rm E}\sum_{t=0}^{N}(c(x_t,u_t)),
\end{equation}
i.e., the expected quadratic running cost in the steady state.

This paper aims to find the optimal admissible control policy $u^{\ast}=L^{\ast}x$ in the sense of minimizing the average expected cost $\lambda$.

\begin{lem}\label{lem2}
Let the control policy $u=Lx$ be admissible. Then $\lambda={\rm tr} (PD)$,  where $P>0$ is the unique solution to the following stochastic Lyapunov equation (SLE)
\begin{align}\label{lem2_1}
\nonumber P&=(A+BL)^{\top}P(A+BL)+\sum_{i=1}^p\bar{\alpha}_iA_i^{\top}PA_i\\
&\quad+\sum_{j=1}^q\bar{\beta}_jL^\top B_j^\top PB_jL+Q+L^{\top}RL.
\end{align}
\end{lem}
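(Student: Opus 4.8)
The plan is to connect the average cost $\lambda$ to the steady-state second-moment dynamics of the closed loop and then exploit a duality between the steady-state second-moment equation and the SLE~\eqref{lem2_1}. Write $A_L := A+BL$ and $X_t := {\rm E}(x_tx_t^{\top})$. Since $u_t=Lx_t$ gives $c(x_t,u_t)=x_t^{\top}(Q+L^{\top}RL)x_t$, we have ${\rm E}\,c(x_t,u_t)={\rm tr}[(Q+L^{\top}RL)X_t]$, so that
\begin{equation}\nonumber
\lambda=\lim_{N\to\infty}\frac{1}{N}\sum_{t=0}^{N}{\rm tr}\big[(Q+L^{\top}RL)X_t\big].
\end{equation}
Admissibility of $u=Lx$ (Definitions~\ref{def1} and \ref{def2}) means $X_t\to X$ for some $X>0$; since a Cesàro mean inherits the limit of a convergent sequence, this yields $\lambda={\rm tr}[(Q+L^{\top}RL)X]$. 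Passing to the limit in the recursion of Remark~\ref{rem_Exx} shows that $X$ solves
\begin{equation}\nonumber
X=A_LXA_L^{\top}+\sum_{i=1}^p\bar{\alpha}_iA_iXA_i^{\top}+\sum_{j=1}^q\bar{\beta}_jB_jLXL^{\top}B_j^{\top}+D.
\end{equation}

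Next I introduce the linear Lyapunov operator $\mathcal{L}(X):=A_LXA_L^{\top}+\sum_{i=1}^p\bar{\alpha}_iA_iXA_i^{\top}+\sum_{j=1}^q\bar{\beta}_jB_jLXL^{\top}B_j^{\top}$, whose adjoint with respect to the trace inner product $\langle M,N\rangle:={\rm tr}(MN)$ is $\mathcal{L}^{\ast}(P)=A_L^{\top}PA_L+\sum_{i=1}^p\bar{\alpha}_iA_i^{\top}PA_i+\sum_{j=1}^q\bar{\beta}_jL^{\top}B_j^{\top}PB_jL$. In this notation the steady-state equation reads $X=\mathcal{L}(X)+D$, while the SLE~\eqref{lem2_1} reads $P=\mathcal{L}^{\ast}(P)+(Q+L^{\top}RL)$, so that $Q+L^{\top}RL=P-\mathcal{L}^{\ast}(P)$.

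The identity $\lambda={\rm tr}(PD)$ then follows from the adjoint relation ${\rm tr}(\mathcal{L}^{\ast}(P)X)={\rm tr}(P\,\mathcal{L}(X))$:
\begin{align}
\lambda&={\rm tr}\big[(P-\mathcal{L}^{\ast}(P))X\big]={\rm tr}(PX)-{\rm tr}(P\,\mathcal{L}(X)) \nonumber\\
&={\rm tr}\big[P(X-\mathcal{L}(X))\big]={\rm tr}(PD). \nonumber
\end{align}
It remains to justify existence, uniqueness, and definiteness of $P$. Vectorizing $\mathcal{L}$ shows $\rho(\mathcal{L})=\rho(\mathcal{L}^{\ast})$, and the affine recursion $X_{t+1}=\mathcal{L}(X_t)+D$ with $D>0$ can converge only if $\mathcal{L}^{s}(I)\to0$, i.e.\ $\rho(\mathcal{L})<1$; hence $I-\mathcal{L}^{\ast}$ is invertible and the SLE has the unique solution $P=\sum_{s=0}^{\infty}(\mathcal{L}^{\ast})^{s}(Q+L^{\top}RL)$, a convergent sum of positive semidefinite terms, so $P\geq0$ (strict positivity following from $R>0$ together with a standard detectability argument on $Q+L^{\top}RL$).

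The main obstacle I anticipate is this last step: turning the second-moment convergence in Definition~\ref{def1} into the spectral bound $\rho(\mathcal{L})<1$ that makes $I-\mathcal{L}^{\ast}$ invertible and pins down $P$ uniquely. Everything else—the Cesàro reduction, the steady-state equation from Remark~\ref{rem_Exx}, and the adjoint bookkeeping that produces ${\rm tr}(PD)$—is routine once the Schur stability of the Lyapunov operator is in hand.
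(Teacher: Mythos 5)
Your argument is sound, but be aware that the paper offers no proof to compare against: Lemma~\ref{lem2} is dispatched in one line as ``an extension of [Section III of the cited reference], and therefore omitted.'' What you have written is a self-contained reconstruction of the standard argument that citation points to, and it is essentially correct. The trace-duality mechanism is exactly right: the steady-state second-moment equation $X=\mathcal{L}(X)+D$ (obtained by letting $k\to\infty$ in Remark~\ref{rem_Exx}, legitimate since admissibility gives $X_t\to X$) pairs against the SLE \eqref{lem2_1} written as $P-\mathcal{L}^{\ast}(P)=Q+L^{\top}RL$, and the adjoint identity ${\rm tr}(\mathcal{L}^{\ast}(P)X)={\rm tr}(P\mathcal{L}(X))$ collapses $\lambda={\rm tr}[(Q+L^{\top}RL)X]$ to ${\rm tr}(PD)$. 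The Ces\`aro step is fine, and your treatment of the step you flag as the main obstacle is also correct: since $\mathcal{L}$ is a positive operator and $D>0$, convergence of $X_t=\mathcal{L}^t(X_0)+\sum_{s=0}^{t-1}\mathcal{L}^s(D)$ bounds the monotone sum $\sum_s\mathcal{L}^s(\epsilon I)$, forcing $\mathcal{L}^s(I)\to0$; positivity then gives $\mathcal{L}^s(M)\to0$ for every symmetric $M$, and pointwise decay on a finite-dimensional space is norm decay, so $\rho(\mathcal{L})=\rho(\mathcal{L}^{\ast})<1$ and $I-\mathcal{L}^{\ast}$ is invertible, yielding existence and uniqueness of $P$ via the Neumann series.

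The one genuine soft spot is your parenthetical claim that $P>0$ follows from $R>0$ ``together with a standard detectability argument on $Q+L^{\top}RL$.'' Under the paper's stated hypotheses ($Q\geq0$, $R>0$) this is false without an additional assumption: take $Q=0$ and $L=0$ admissible, so $Q+L^{\top}RL=0$ and the unique solution is $P=0$. Strict definiteness requires something like $Q>0$ or an observability/detectability hypothesis that the paper never states, so this is a defect you have inherited from the lemma's own statement rather than introduced. Note also that the identity $\lambda={\rm tr}(PD)$ --- the actual content used downstream --- needs only existence and uniqueness of $P\geq0$, which your argument fully delivers; if you present this proof, either add the missing hypothesis for $P>0$ or weaken the conclusion to $P\geq0$.
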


\begin{proof}
This lemma is a extension of \cite[Section \uppercase\expandafter{\romannumeral3}]{1099303}, and therefore is omitted here.
\end{proof}

Based on equation \eqref{vf3}, one has a Bellman equation for the value function $J(.)$
\begin{align}\label{bell}
J(x_k)={\rm E}\big(c(x_k,u_k)\big)-\lambda+ J(x_{k+1}).
\end{align}

\begin{lem}\label{lem_qua}
Let the control policy $u=Lx$ be admissible.
Then, the value function \eqref{vf3} can be written as
\begin{equation}\label{vf4}
J(x_k)={\rm E}(x_k^{\top}Px_k)+\bar{s},
\end{equation}
where $P>0$ is the unique solution to SLE \eqref{lem2_1} and the term $\bar{s}$ is a const.
\end{lem}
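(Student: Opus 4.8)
The plan is to show that the quadratic ansatz \eqref{vf4} reproduces the value function \eqref{vf3} by driving the one–step covariance recursion of Remark \ref{rem_Exx} through the SLE \eqref{lem2_1}. Let $P>0$ be the unique solution of \eqref{lem2_1} provided by Lemma \ref{lem2}, and introduce the scalar sequence $V_k := {\rm E}(x_k^{\top}Px_k)={\rm tr}\big(P\,{\rm E}(x_kx_k^{\top})\big)$. The core step is to prove that $V_k$ satisfies exactly the Bellman-type recursion \eqref{bell} obeyed by $J$, namely $V_k - V_{k+1} = {\rm E}\big(c(x_k,u_k)\big) - \lambda$; once this is in hand the claim follows by a telescoping argument.

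To establish the recursion I would substitute the covariance update of Remark \ref{rem_Exx} into $V_{k+1}={\rm tr}\big(P\,{\rm E}(x_{k+1}x_{k+1}^{\top})\big)$ and move $P$ through each summand using the cyclic invariance of the trace. This gives
\begin{align}
\nonumber V_k - V_{k+1} &= {\rm tr}\Big(\big[P-(A+BL)^{\top}P(A+BL)\\
\nonumber &\quad\textstyle -\sum_{i=1}^p\bar{\alpha}_iA_i^{\top}PA_i-\sum_{j=1}^q\bar{\beta}_jL^{\top}B_j^{\top}PB_jL\big]\\
\nonumber &\quad\times{\rm E}(x_kx_k^{\top})\Big)-{\rm tr}(PD).
\end{align}
By the SLE \eqref{lem2_1} the bracketed matrix equals $Q+L^{\top}RL$, while ${\rm tr}(PD)=\lambda$ by Lemma \ref{lem2}. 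Since $u_k=Lx_k$ yields ${\rm tr}\big((Q+L^{\top}RL){\rm E}(x_kx_k^{\top})\big)={\rm E}(x_k^{\top}Qx_k+u_k^{\top}Ru_k)={\rm E}\big(c(x_k,u_k)\big)$, the desired identity $V_k-V_{k+1}={\rm E}(c(x_k,u_k))-\lambda$ follows. Note that the cross terms in $x_{k+1}x_{k+1}^{\top}$ never appear here, as they were already eliminated in Remark \ref{rem_Exx} by the zero-mean and independence assumptions on the noises.

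I would then telescope this recursion from $t=k$ to $t=N$, obtaining $V_k - V_{N+1}=\sum_{t=k}^{N}\big({\rm E}(c(x_t,u_t))-\lambda\big)$, and let $N\to\infty$. Comparing the right-hand side with the definition \eqref{vf3} gives $J(x_k)=V_k-\lim_{N\to\infty}V_{N+1}={\rm E}(x_k^{\top}Px_k)+\bar{s}$, where $\bar{s}:=-\lim_{N\to\infty}V_{N+1}$.

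The step I expect to be the main obstacle is proving that this boundary limit exists and is a finite constant independent of $k$, since \eqref{vf4} asserts precisely that $\bar{s}$ is a constant. This is exactly where admissibility enters: by Definitions \ref{def1}–\ref{def2} the closed loop is ASS, so ${\rm E}(x_kx_k^{\top})\to X$ for some $X>0$, whence $V_{N+1}={\rm tr}\big(P\,{\rm E}(x_{N+1}x_{N+1}^{\top})\big)\to{\rm tr}(PX)$ and $\bar{s}=-{\rm tr}(PX)$ is well defined. The same convergence makes the series in \eqref{vf3} summable, so no separate summability argument is required, and the positive definiteness and uniqueness of $P$ are inherited directly from Lemma \ref{lem2}.
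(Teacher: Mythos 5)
Your proof is correct, but it runs in the opposite direction from the paper's. The paper starts from the ansatz $J(x_k)={\rm E}(x_k^{\top}\bar{P}x_k)+s_k$ with a possibly time-varying offset, substitutes it into the Bellman equation \eqref{bell}, computes ${\rm E}(x_{k+1}^{\top}\bar{P}x_{k+1})$ from the closed-loop dynamics, and matches terms to conclude that $\bar{P}$ satisfies the SLE \eqref{lem2_1}, whence $\bar{P}=P$ by the uniqueness in Lemma \ref{lem2} and $s_k=s_{k+1}$. You instead work forward from the definition \eqref{vf3}: taking the known solution $P$ of \eqref{lem2_1}, you verify the one-step identity $V_k-V_{k+1}={\rm E}\big(c(x_k,u_k)\big)-\lambda$ for $V_k={\rm tr}\big(P\,{\rm E}(x_kx_k^{\top})\big)$ via the covariance recursion of Remark \ref{rem_Exx} and $\lambda={\rm tr}(PD)$, then telescope and pass to the limit using admissibility (ASS). This buys two things the paper's argument does not deliver: first, you avoid the ansatz altogether --- the paper's ``without loss of generality'' presupposes the quadratic-plus-constant form it sets out to prove, and its term-matching along a single trajectory is only loosely justified, whereas your telescoping constructs the representation from scratch and simultaneously establishes that the partial sums in \eqref{vf3} actually converge, a point the paper takes for granted; second, you identify the constant explicitly as $\bar{s}=-{\rm tr}(PX)$ with $X$ the stationary covariance, which is consistent with $J$ vanishing at stationarity (there ${\rm E}\big(c(x_t,u_t)\big)=\lambda$ for every $t$), while the paper leaves $\bar{s}$ unspecified. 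The paper's guess-and-verify route is shorter and standard; yours is more self-contained and slightly more informative, and all the ingredients you invoke (uniqueness and positivity of $P$ from Lemma \ref{lem2}, the covariance recursion, the convergence ${\rm E}(x_kx_k^{\top})\to X$ from Definitions \ref{def1}--\ref{def2}) are available in the paper, so the argument closes with no gaps.
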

\begin{proof}
Without loss of generality, suppose $J(x_k)={\rm E}(x_k^{\top}\bar{P}x_k)+s_k$, where $s_k$ is independent on $x_k$.
Let the control policy $u=Lx$ be admissible, $x_{k+1}$ is given by
\begin{equation}
\nonumber x_{k+1}=(A+BL)x_k+\sum_{i=1}^p{\alpha}_{ik}A_ix_k+
\sum_{j=1}^q\beta_{jk}B_jLx_k+d_k.
\end{equation}
Therefore,
\begin{align}
\nonumber {\rm E}&(x_{k+1}^{\top}\bar{P}x_{k+1})={\rm E}\big(x_k^{\top}\big((A+BL)^{\top}\bar{P}(A+BL)\\
\nonumber&+\sum_{i=1}^p\bar{\alpha}_iA_i^{\top}\bar{P}A_i
+\sum_{j=1}^q\bar{\beta}_jL^{\top}B_j^\top\bar{P}B_jL
\big)x_k\big)+ {\rm tr}(\bar{P}D).
\end{align}
Based on the Bellman equation for the value function $J(.)$, one has
\begin{align}
\nonumber &J(x_k)-J(x_{k+1})={\rm E}\big( x_k^\top(Q+L^\top R L) x_k\big)-\lambda\\
\nonumber &={\rm E}(x_k^{\top}\bar{P}x_k)+s_k-{\rm E}(x_{k+1}^{\top}\bar{P}x_{k+1})+s_{k+1}\\
\nonumber &={\rm E}\big(x_k^{\top}\big(\bar{P}-(A+BL)^{\top}\bar{P}(A+BL)
-\sum_{i=1}^p\bar{\alpha}_iA_i^{\top}\bar{P}A_i\\
\nonumber &-\sum_{j=1}^q\bar{\beta}_jL^{\top}B_j^\top\bar{P}B_jL
\big)x_k\big)+s_k-s_{k+1}- {\rm tr}(\bar{P}D).
\end{align}
By matching terms, one has
\begin{align}
\nonumber \bar{P}&=(A+BL)^{\top}\bar{P}(A+BL)
+\sum_{i=1}^p\bar{\alpha}_iA_i^{\top}\bar{P}A_i\\
\nonumber&\quad+\sum_{j=1}^q\bar{\beta}_jL^{\top}B_j^\top\bar{P}B_jL
+Q+L^{\top}RL.
\end{align}
Based on the admissibility of the control policy $u=Lx$ and Lemma \ref{lem2}, we obtain $\bar{P}=P$, which means that the kernel matrix of the average expected cost $\lambda$ is equal to the kernel matrix of the value function $J(.)$ in face of the same admissible control policy.
Thus, $\lambda={\rm tr}(PD)={\rm tr}(\bar{P}D)$ and $s_k=s_{k+1}$.
This completes the proof.
\end{proof}

\begin{rem}
The authors of \cite{pmlr-v89-abbasi-yadkori19a,9029904} considered the value function \eqref{vf3} to be quadratic functions without the constant $\bar{s}$.
Here we give a particular proof to show that the constant may exist independent of $x_k$.
Note that one can also prove the existence of the term $\bar{s}$ by the method in the proof of \cite[Lemma 1]{WANG20181}.
\end{rem}

From Lemma \ref{lem2} and Lemma \ref{lem_qua}, one sees that minimizing the average expected cost $\lambda$ is equivalent to minimizing the value function $J(.)$ in the sense that the kernel matrix $P$ is equal in face of the same admissible control policy.
Hence, we convert the original problem into finding the optimal admissible control policy $u^{\ast}=L^{\ast}x$ in the sense of minimizing the value function $J(.)$.

Putting the value function \eqref{vf4} into  equation \eqref{bell},
one obtains the Bellman equation in terms of
the kernel matrix $P$ of the value function
\begin{align}\label{bell2}
 {\rm E}(x_k^{\top}Px_k)&={\rm E}\big(c(x_k,u_k)\big)-\lambda
 +{\rm E}(x_{k+1}^{\top}Px_{k+1}).
\end{align}
The stochastic LQR problem can be
solved based on a stochastic algebra Riccati equation (SARE) given in the following lemma.

\begin{lem}\label{lem_op}
The optimal control gain for the stochastic LQR problem is
\begin{equation}\label{lem_op_1}
L^{\ast}=-(R+ B^{\top}P^{\ast}B+ \sum_{j=1}^q\bar{\beta}_jB_j^\top P^{\ast}B_j)^{-1} B^{\top}P^{\ast}A,
\end{equation}
and $P^{\ast}>0$ is the unique solution to the following SARE
\begin{align}\label{lem_op_2}
\nonumber  P^{\ast}&=Q+ A^{\top}P^{\ast}A+\sum_{i=1}^p\bar{\alpha}_iA_i^{\top}P^{\ast}A_i
-A^{\top}P^{\ast}B\\
&\quad \times(R+ B^{\top}P^{\ast}B+ \sum_{j=1}^q\bar{\beta}_jB_j^\top P^{\ast}B_j)^{-1}
        B^{\top}P^{\ast}A.
\end{align}
Hence, the minimal average expected cost is given as
\begin{align}\label{lem_op_3}
\lambda^\ast={\rm tr}(P^\ast D).
\end{align}
\end{lem}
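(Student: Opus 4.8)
The plan is to obtain \eqref{lem_op_1}--\eqref{lem_op_3} from the dynamic programming principle applied to the Bellman equation \eqref{bell2}. First I would posit that the optimal value function retains the quadratic form furnished by Lemma \ref{lem_qua}, i.e. $J^\ast(x_k)={\rm E}(x_k^\top P^\ast x_k)+\bar s^\ast$ with $P^\ast>0$, and write the associated Bellman optimality equation
\begin{equation}
\nonumber J^\ast(x_k)=\min_{u_k}\Big\{{\rm E}\big(c(x_k,u_k)\big)-\lambda^\ast+J^\ast(x_{k+1})\Big\},
\end{equation}
in which the additive constant $\bar s^\ast$ cancels, leaving a minimization of the quadratic part over $u_k$.

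Next I would substitute the dynamics \eqref{sys1} into ${\rm E}(x_{k+1}^\top P^\ast x_{k+1})$ and take the conditional expectation over the mutually independent, zero-mean noises $\alpha_{ik},\beta_{jk},d_k$. Using their covariances $\bar\alpha_i,\bar\beta_j,D$ exactly as in Remark \ref{rem_Exx}, all cross terms vanish and one is left with
\begin{align}
\nonumber {\rm E}(x_{k+1}^\top P^\ast x_{k+1}\mid x_k,u_k)&=(Ax_k+Bu_k)^\top P^\ast(Ax_k+Bu_k)\\
\nonumber&\quad+\sum_{i=1}^p\bar\alpha_i x_k^\top A_i^\top P^\ast A_i x_k\\
\nonumber&\quad+\sum_{j=1}^q\bar\beta_j u_k^\top B_j^\top P^\ast B_j u_k+{\rm tr}(P^\ast D).
\end{align}
The quantity inside the minimization is then a strictly convex quadratic in $u_k$, since its Hessian $R+B^\top P^\ast B+\sum_{j=1}^q\bar\beta_j B_j^\top P^\ast B_j$ is positive definite (as $R>0$ and $P^\ast>0$). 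The first-order stationarity condition yields the unique minimizer $u_k^\ast=L^\ast x_k$ with $L^\ast$ as in \eqref{lem_op_1}; equivalently, one may complete the square to exhibit the minimizer directly and make the minimum manifest. Substituting $u_k^\ast=L^\ast x_k$ back and matching the coefficients of the quadratic form ${\rm E}(x_k^\top(\cdot)x_k)$ produces the SARE \eqref{lem_op_2}, while matching the remaining constant terms gives ${\rm tr}(P^\ast D)-\lambda^\ast=0$, which is \eqref{lem_op_3}.

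The hard part will be justifying that this completion-of-squares fixed point is genuinely optimal and well posed, rather than merely a stationary point of the Bellman recursion. Concretely, I would need to establish (i) existence and uniqueness of a positive definite solution $P^\ast$ to the SARE \eqref{lem_op_2}, which does not follow from the pointwise minimization alone and typically requires a monotonicity or contraction argument on the associated Riccati operator (or an appeal to standard stochastic LQR theory); and (ii) verification that the resulting gain $L^\ast$ is admissible in the sense of Definition \ref{def2}, so that the closed loop is ASS and $J^\ast$ is finite, justifying the interchange of minimization with the infinite-horizon expectation. Given these two facts, Lemma \ref{lem_qua} identifies the kernel matrix of the policy $u^\ast=L^\ast x$ with $P^\ast$, and the established equivalence between minimizing $\lambda$ and minimizing $J(\cdot)$ closes the argument, yielding $\lambda^\ast={\rm tr}(P^\ast D)$.
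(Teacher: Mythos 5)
Your proposal follows essentially the same route the paper indicates: the paper's own proof is omitted and simply states that it rests on the first-order necessary condition, which is exactly the stationarity-of-the-quadratic-Bellman-equation argument you carry out in detail (substitute the dynamics, take expectations over the independent zero-mean noises, minimize the strictly convex quadratic in $u_k$, and match quadratic and constant terms to get \eqref{lem_op_1}--\eqref{lem_op_3}). Your flagged caveats---existence and uniqueness of the positive definite solution $P^{\ast}$ and admissibility of $L^{\ast}$---are genuine and are not handled in the paper's one-line proof either; the paper effectively defers them to the iterative construction in Lemma \ref{lem3} (via the cited reference), so acknowledging them as requiring standard stochastic LQR theory is appropriate rather than a defect of your argument.
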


\begin{proof}
The proof for Lemma \ref{lem_op} is based on the first order necessary condition, and therefore is omitted here.
\end{proof}

\section{model-based scheme to solve stochastic LQR}

In this section, a model-based iterative scheme is provided to find the minimum $P^{\ast}$ and the optimal control gain $L^{\ast}$.
A set of control gains are evaluated in an off-line manner in Lemma \ref{lem3} requiring  complete
knowledge of the system matrices.
\begin{lem}\label{lem3}
Let the initial control gain $L^{(0)}$ be admissible. Consider the two sequences $\{P^{(\tau)}\}_{\tau=0}^\infty$ and $\{L^{(\tau)}\}_{\tau=1}^\infty$ obtained via solving
\begin{align}\label{lem3_1}
\nonumber &P^{(\tau)}= (A+BL^{(\tau)})^{\top}P^{(\tau)}(A+BL^{(\tau)})
+\sum_{i=1}^p\bar{\alpha}_iA_i^{\top}P^{(\tau)}A_i\\
 &+\sum_{j=1}^q\bar{\beta}_j(L^{(\tau)})^\top B_j^\top P^{(\tau)}B_jL^{(\tau)}+(L^{(\tau)})^{\top}RL^{(\tau)}+Q
 \end{align}
and
 \begin{align}\label{lem3_2}
 L^{(\tau+1)}=-\big(R+ B^{\top}P^{(\tau)}B+ \sum_{j=1}^q\bar{\beta}_jB_j^\top P^{(\tau)}B_j\big)^{-1} B^{\top}P^{(\tau)}A.
\end{align}
Then, the following statements hold:
\begin{enumerate}
  \item $P^{\ast}\leq P^{(\tau+1)}\leq P^{(\tau)}$;
  \item $\underset{\tau\rightarrow \infty}{\lim}P^{(\tau)}=P^{\ast}$,
  $\underset{\tau\rightarrow \infty}{\lim}L^{(\tau)}=L^{\ast}$, where $P^{\ast}$ is the unique solution to SARE \eqref{lem_op_2}
 and $L^{\ast}$ is computed as \eqref{lem_op_1};
  \item $L^{\ast}$ and $L^{(\tau)}$ are admissible.
\end{enumerate}
\end{lem}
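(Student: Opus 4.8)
The plan is to read \eqref{lem3_1}--\eqref{lem3_2} as a Hewer-type policy-iteration recursion and to prove all three claims by a single induction on $\tau$, with the stochastic closed-loop Lyapunov operator as the central object. For a gain $L$ I define the linear map $\mathcal{L}_L(P)=(A+BL)^{\top}P(A+BL)+\sum_{i=1}^{p}\bar{\alpha}_iA_i^{\top}PA_i+\sum_{j=1}^{q}\bar{\beta}_jL^{\top}B_j^{\top}PB_jL$ acting on symmetric matrices; it is linear and positive (it maps $P\geq 0$ to $\mathcal{L}_L(P)\geq 0$). By Remark~\ref{rem_Exx} and Lemma~\ref{lem2}, $L$ is admissible if and only if $\rho(\mathcal{L}_L)<1$, in which case the SLE $P=\mathcal{L}_L(P)+Q+L^{\top}RL$ has the unique solution $P=\sum_{k\geq 0}\mathcal{L}_L^{k}(Q+L^{\top}RL)>0$. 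I would also record the completion-of-squares identity: with $S(P)=R+B^{\top}PB+\sum_{j=1}^{q}\bar{\beta}_jB_j^{\top}PB_j>0$ and $\widehat{L}(P)=-S(P)^{-1}B^{\top}PA$, one has $\mathcal{L}_L(P)+Q+L^{\top}RL=\mathcal{L}_{\widehat{L}(P)}(P)+Q+\widehat{L}(P)^{\top}R\widehat{L}(P)+(L-\widehat{L}(P))^{\top}S(P)(L-\widehat{L}(P))$ for every $L$. Note that \eqref{lem3_2} says precisely $L^{(\tau+1)}=\widehat{L}(P^{(\tau)})$, while the SARE \eqref{lem_op_2} is the fixed point $P^{\ast}=\mathcal{L}_{L^{\ast}}(P^{\ast})+Q+(L^{\ast})^{\top}RL^{\ast}$ with $L^{\ast}=\widehat{L}(P^{\ast})$.

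The induction hypothesis at stage $\tau$ is that $L^{(\tau)}$ is admissible, so $P^{(\tau)}>0$ is well defined; the base case is the standing assumption that $L^{(0)}$ is admissible. For the lower bound I would subtract the SLE for $P^{(\tau)}$ from the SARE rewritten through the identity at $P=P^{\ast}$, $L=L^{(\tau)}$, obtaining $(P^{(\tau)}-P^{\ast})-\mathcal{L}_{L^{(\tau)}}(P^{(\tau)}-P^{\ast})=(L^{(\tau)}-L^{\ast})^{\top}S(P^{\ast})(L^{(\tau)}-L^{\ast})\geq 0$; since $\rho(\mathcal{L}_{L^{(\tau)}})<1$ makes $(I-\mathcal{L}_{L^{(\tau)}})^{-1}$ a positive operator, this gives $P^{(\tau)}\geq P^{\ast}$. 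Next I would show $L^{(\tau+1)}$ is admissible. Evaluating the identity at $P=P^{(\tau)}$, $L=L^{(\tau)}$ and subtracting the SLE yields $P^{(\tau)}-\mathcal{L}_{L^{(\tau+1)}}(P^{(\tau)})=W$, where $W=Q+(L^{(\tau+1)})^{\top}RL^{(\tau+1)}+(L^{(\tau)}-L^{(\tau+1)})^{\top}S(P^{(\tau)})(L^{(\tau)}-L^{(\tau+1)})\geq 0$, together with $P^{(\tau)}>0$.

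I expect the admissibility propagation in this last step to be the main obstacle. Because $\mathcal{L}_{L^{(\tau+1)}}$ is a positive operator, its spectral radius is an eigenvalue of the adjoint with a nonzero eigenvector $V\geq 0$ (Perron--Frobenius on the cone of positive semidefinite matrices); pairing $P^{(\tau)}-\mathcal{L}_{L^{(\tau+1)}}(P^{(\tau)})=W$ with $V$ in the trace inner product gives $(1-\rho(\mathcal{L}_{L^{(\tau+1)}}))\,{\rm tr}(P^{(\tau)}V)={\rm tr}(WV)\geq 0$, and ${\rm tr}(P^{(\tau)}V)>0$ forces $\rho(\mathcal{L}_{L^{(\tau+1)}})\leq 1$. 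The delicate part is excluding equality: $\rho=1$ would give ${\rm tr}(WV)=0$, hence $WV=0$, and since $R>0$ yields $W\geq (L^{(\tau+1)})^{\top}RL^{(\tau+1)}$ one deduces $L^{(\tau+1)}V=0$ and $QV=0$, so that the eigenrelation collapses to $AVA^{\top}+\sum_{i=1}^{p}\bar{\alpha}_iA_iVA_i^{\top}=V$ -- a unit-modulus mode of the noisy open-loop dynamics that is invisible to $Q$. Ruling this out requires a mean-square detectability hypothesis on the pair $(A,\{A_i\};Q)$ (automatic if $Q>0$); I would state it explicitly as the hypothesis under which admissibility propagates, since without it the argument only delivers $\rho\leq 1$.

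With $L^{(\tau+1)}$ admissible, $P^{(\tau+1)}>0$ exists, and the monotonicity follows from the same inversion: subtracting the SLE for $P^{(\tau+1)}$ from $P^{(\tau)}=\mathcal{L}_{L^{(\tau+1)}}(P^{(\tau)})+W$ gives $(P^{(\tau)}-P^{(\tau+1)})-\mathcal{L}_{L^{(\tau+1)}}(P^{(\tau)}-P^{(\tau+1)})=(L^{(\tau)}-L^{(\tau+1)})^{\top}S(P^{(\tau)})(L^{(\tau)}-L^{(\tau+1)})\geq 0$, whence $P^{(\tau)}\geq P^{(\tau+1)}$; this closes the induction and establishes statement 1) and the admissibility in statement 3). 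For statement 2), the sequence $\{P^{(\tau)}\}$ is nonincreasing and bounded below by $P^{\ast}$, so it converges to some $\bar{P}\geq P^{\ast}>0$; continuity of \eqref{lem3_2} gives $L^{(\tau)}\to\widehat{L}(\bar{P})=:\bar{L}$, and passing to the limit in \eqref{lem3_1} shows $(\bar{P},\bar{L})$ solves the SARE, so uniqueness in Lemma~\ref{lem_op} forces $\bar{P}=P^{\ast}$ and $\bar{L}=L^{\ast}$. Finally $L^{\ast}$ is admissible because $P^{\ast}>0$ satisfies $P^{\ast}-\mathcal{L}_{L^{\ast}}(P^{\ast})=Q+(L^{\ast})^{\top}RL^{\ast}\geq 0$, to which the same Lyapunov/detectability argument applies.
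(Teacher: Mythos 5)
Your proposal is correct, and in fact it supplies an argument that the paper itself does not contain: the paper's entire ``proof'' of Lemma \ref{lem3} is a citation to \cite[Lemma 6]{lai2020model} with the discount factor set to $\gamma=1$. What you have written is the standard Hewer/Kleinman-type policy-iteration induction adapted to the mean-square setting, which is almost certainly the substance of the cited result: your completion-of-squares identity is verified by a direct expansion (the quadratic-in-$L$ part of $\mathcal{L}_L(P)+L^{\top}RL$ is $A^{\top}PBL+L^{\top}B^{\top}PA+L^{\top}S(P)L$, and shifting by $\widehat{L}(P)=-S(P)^{-1}B^{\top}PA$ gives exactly the $(L-\widehat{L}(P))^{\top}S(P)(L-\widehat{L}(P))$ residual), the two subtractions yielding $P^{(\tau)}\geq P^{\ast}$ and $P^{(\tau)}\geq P^{(\tau+1)}$ via positivity of $(I-\mathcal{L}_{L^{(\tau)}})^{-1}=\sum_{k\geq0}\mathcal{L}_{L^{(\tau)}}^{k}$ are sound, and the limiting argument (monotone bounded sequence, continuity of $\widehat{L}(\cdot)$, uniqueness from Lemma \ref{lem_op}) closes statement 2) correctly. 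The one place where you go beyond the paper is also the most valuable: your observation that with $Q\geq0$ only, the Lyapunov pairing $(1-\rho(\mathcal{L}_{L^{(\tau+1)}}))\,{\rm tr}(P^{(\tau)}V)={\rm tr}(WV)$ delivers $\rho\leq1$ but not $\rho<1$, and that excluding the unit-modulus mode $AVA^{\top}+\sum_{i}\bar{\alpha}_iA_iVA_i^{\top}=V$ with $QV=0$ requires a mean-square detectability hypothesis on $(A,\{A_i\};Q)$, is a genuine point. The paper's standing assumptions (existence of an admissible policy, i.e.\ mean-square stabilizability, plus $Q\geq0$, $R>0$) do not by themselves rule out this degenerate mode, so admissibility propagation and the claimed uniqueness of the SARE solution implicitly rest on such a detectability condition; it holds automatically when $Q>0$, which is the regime of the paper's numerical example ($Q=I$). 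Making that hypothesis explicit, as you do, is a strengthening rather than a defect of your proof.
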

\begin{proof}
The proof of Lemma \ref{lem3} is a simplified version of \cite[Lemma 6]{lai2020model}, by letting the discount factor $\gamma=1$.
\end{proof}

\begin{rem}
Different from \cite{7574391,8122054,lai2020model}, the need for judicious selection the discount factor has been avoided and the iterative control gains are admissible naturally with the condition that the initial control gain is admissible.
\end{rem}

\section{Model-free Scheme to Solve Stochastic LQR}
In this section, in order to avoid resorting to the
system matrices, a new model-free learning scheme is proposed to solve the stochastic LQR problem by using Q-function.

Based on Bellman equation \eqref{bell},
define a Q-function as
\begin{equation}\label{eq14}
Q(x_k,\eta_k)={\rm E}\big(c(x_k,\eta_k)\big)-\lambda  + J(x_{k+1}),
\end{equation}
where $\eta_k$ is an arbitrary control input at time $k$ and the control policy $u=Lx$ is followed from time $k+1$ onwards. If $\eta_k=u_k$, one knows that
\begin{equation}\label{eq15}
Q(x_k,u_k)=J(x_k).
\end{equation}

Based on Lemma \ref{lem_qua} and system \eqref{sys1}, the above Q-function becomes
\begin{align}\label{eq57}
\nonumber Q(x_k,u_k)&={\rm E}\big(\begin{bmatrix}x_k\\
                        u_k
         \end{bmatrix}^{\top}
\begin{bmatrix}  H_{xx}&H_{xu}\\
                             H_{ux}&H_{uu}
         \end{bmatrix}
         \begin{bmatrix}x_k\\
                        u_k
         \end{bmatrix}\big)+\bar{s}\\
&\triangleq   {\rm E}\big(\begin{bmatrix}x_k\\
                        u_k
         \end{bmatrix}^{\top}
H
         \begin{bmatrix}x_k\\
                        u_k
         \end{bmatrix}\big)+\bar{s}       ,
\end{align}
with
\begin{align}
\nonumber H_{xx}&=Q+ A^{\top}PA+ \sum_{i=1}^p\bar{\alpha}_iA_i^{\top}PA_i\\
\nonumber H_{xu}&=  A^{\top}PB=H_{ux}^\top\\
\nonumber  H_{uu}&=R+ B^{\top}PB+ \sum_{j=1}^q\bar{\beta}_jB_j^\top PB_j.
\end{align}

Denote the optimal Q-function as \cite{sutton2018reinforcement}
\begin{equation}
\nonumber Q^{\ast}(x_k,u_k)={\rm E}\big(c(x_k,u_k)\big)-\lambda^\ast+ J^{\ast}(x_{k+1}).
\end{equation}
Through solving
$\frac{\partial Q^{\ast}(x_k,u_k)}{\partial u_k}=0$, one obtains the optimal control gain
\begin{equation}\label{eq59}
L^{\ast}=-(H^{\ast}_{uu})^{-1}H^{\ast}_{ux},
\end{equation}
where $H^{\ast}_{uu}=R+ B^{\top}P^{\ast}B+ \sum_{j=1}^q\bar{\beta}_jB_j^\top P^{\ast}B_j$,
$H_{ux}=B^{\top}P^{\ast}A$, and $P^{\ast}$ satisfies SARE \eqref{lem_op_2}.
Therefore, from \eqref{eq59} we know that the optimal control gain can be obtained  by finding the optimal kernel matrix of Q-function.

From \eqref{vf4}, \eqref{eq15}, \eqref{eq57} and the positive definiteness  of ${\rm E}(x_kx_k^{\top})$ in Remark 1, we get
\begin{equation}\label{eq60}
P=      \begin{bmatrix}I\\
                        L
        \end{bmatrix}^{\top}
        H
        \begin{bmatrix}I\\
                        L
        \end{bmatrix}.
\end{equation}

Hence, we obtain the Bellman function for Q-function
\begin{align}\label{eq16}
\nonumber Q(x_k,u_k)&={\rm E}\big(c(x_k,u_k)\big) -{\rm {tr}}( H\begin{bmatrix}I\\
                   L
    \end{bmatrix} D
    \begin{bmatrix}I\\
                   L
    \end{bmatrix}^{\top})\\
&\quad +  Q(x_{k+1},u_{k+1})
\end{align}
based on Lemma \ref{lem2}, \eqref{eq14}, \eqref{eq15} and \eqref{eq60}.
Furthermore, substituting  \eqref{eq57} into \eqref{eq16}, one has the Bellman equation in terms of the kernel matrix $H$ of Q-function
\begin{align}\label{eq16_2}
\nonumber {\rm E}\big(\begin{bmatrix}x_k\\
                        u_k
         \end{bmatrix}^{\top}
H
         \begin{bmatrix}  x_k\\
                          u_k
         \end{bmatrix}\big)&={\rm E}\big(c(x_k,u_k)\big)
         - {\rm {tr}}\big( H\begin{bmatrix}I\\
                   L
    \end{bmatrix} D
    \begin{bmatrix}I\\
                   L
    \end{bmatrix}^{\top}\big)\\
 &\quad + {\rm E}\big(
         \begin{bmatrix}x_{k+1}\\
                        u_{k+1}
         \end{bmatrix}^{\top}
         H
         \begin{bmatrix}  x_{k+1}\\
                          u_{k+1}
         \end{bmatrix} \big)    .
\end{align}

In the following lemma, we give a model-free learning scheme, which is inspired by equations \eqref{eq16_2} and \eqref{eq59}, to learn the optimal control policy online where no system matrices are needed.
\begin{lem}\label{lem4}
Let the initial control gain $L^{(0)}$ be admissible. Consider the two sequences  $\{H^{(\tau)}\}_{\tau=0}^\infty$ and $\{L^{(\tau)}\}_{\tau=1}^\infty$ obtained through the following two steps:
\begin{enumerate}
  \item estimate $H^{(\tau)}$ by solving
\begin{align}\label{on_pi_q_1}
\nonumber &{\rm E}\big(\begin{bmatrix}x_k\\
                        u^{(\tau)}_k
         \end{bmatrix}^{\top}
H^{(\tau)}
         \begin{bmatrix}  x_k\\
                          u^{(\tau)}_k
         \end{bmatrix}\big)\\
\nonumber&={\rm E} \big(c(x_k,u^{(\tau)}_k)\big)
- {\rm {tr}}\big( H^{(\tau)}\begin{bmatrix}I\\
                   L^{(\tau)}
    \end{bmatrix} D
    \begin{bmatrix}I\\
                   L^{(\tau)}
    \end{bmatrix}^{\top}\big)\\
&\quad  + {\rm E}\big(
         \begin{bmatrix}x_{k+1}\\
                        u^{(\tau)}_{k+1}
         \end{bmatrix}^{\top}
         H^{(\tau)}
         \begin{bmatrix}  x_{k+1}\\
                          u^{(\tau)}_{k+1}
         \end{bmatrix}\big);
\end{align}
  \item update the control gain through
\begin{align}\label{on_pi_q_2}
L^{(\tau+1)}=-(H^{(\tau)}_{uu})^{-1}H^{(\tau)}_{ux}.
\end{align}
\end{enumerate}
Then, $\underset{\tau\rightarrow\infty}{\lim} H^{(\tau)}=H^{\ast}=\begin{bmatrix}  H_{xx}^\ast&H_{xu}^\ast\\
                             H_{ux}^\ast&H_{uu}^\ast
         \end{bmatrix}$ and $\underset{\tau\rightarrow\infty}{\lim} L^{(\tau)}=L^{\ast}$, where
\begin{align}
\nonumber H_{xx}^\ast&=Q+ A^{\top}P^\ast A+ \sum_{i=1}^p\bar{\alpha}_iA_i^{\top}P^\ast A_i\\
\nonumber H_{xu}^\ast&= A^{\top}P^\ast B=(H_{ux}^\ast)^\top\\
\nonumber  H_{uu}^\ast&=R+ B^{\top}P^\ast B+ \sum_{j=1}^q\bar{\beta}_jB_j^\top P^\ast B_j
\end{align}
with $P^\ast$ being the unique solution to SARE \eqref{lem_op_2}.
\end{lem}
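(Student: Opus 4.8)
The plan is to show that the model-free iteration defined here generates \emph{exactly} the same sequences $\{P^{(\tau)}\}$ and $\{L^{(\tau)}\}$ as the model-based iteration of Lemma \ref{lem3}, and then to inherit convergence directly from Lemma \ref{lem3}. The bridge between the two schemes is the algebraic correspondence between the Q-function kernel $H$ and the value-function kernel $P$ recorded in \eqref{eq57} and \eqref{eq60}: once I can identify, at each step $\tau$, a matrix $P^{(\tau)}$ with $H^{(\tau)}$ through these relations, both the evaluation equation \eqref{on_pi_q_1} and the improvement rule \eqref{on_pi_q_2} collapse onto their model-based counterparts \eqref{lem3_1} and \eqref{lem3_2}.

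First I would argue by induction on $\tau$ that $L^{(\tau)}$ is admissible and that the $H^{(\tau)}$ produced by the policy-evaluation step \eqref{on_pi_q_1} is precisely the kernel of the Q-function associated with $u=L^{(\tau)}x$. The base case holds since $L^{(0)}$ is admissible by hypothesis. Assuming $L^{(\tau)}$ admissible, Lemma \ref{lem_qua} supplies a well-defined value function with a unique kernel, and substituting $u^{(\tau)}_k=L^{(\tau)}x_k$ turns \eqref{on_pi_q_1} into the Bellman equation \eqref{eq16_2}, which by the derivation \eqref{eq14}--\eqref{eq60} is solved by the $H^{(\tau)}$ whose blocks are those of \eqref{eq57} with $P$ replaced by the solution $P^{(\tau)}$ of the SLE \eqref{lem2_1} at $L=L^{(\tau)}$. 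Setting $P^{(\tau)}:=\begin{bmatrix} I \\ L^{(\tau)} \end{bmatrix}^{\top} H^{(\tau)} \begin{bmatrix} I \\ L^{(\tau)} \end{bmatrix}$ via \eqref{eq60}, I would then verify that this $P^{(\tau)}$ satisfies \eqref{lem3_1}, since \eqref{lem3_1} is nothing but \eqref{lem2_1} evaluated at $L^{(\tau)}$.

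Next I would match the improvement steps. Inserting $H^{(\tau)}_{uu}=R+B^{\top}P^{(\tau)}B+\sum_{j=1}^{q}\bar{\beta}_jB_j^{\top}P^{(\tau)}B_j$ and $H^{(\tau)}_{ux}=B^{\top}P^{(\tau)}A$ into $L^{(\tau+1)}=-(H^{(\tau)}_{uu})^{-1}H^{(\tau)}_{ux}$ reproduces \eqref{lem3_2} verbatim. This closes the induction: Lemma \ref{lem3}(3) then gives admissibility of $L^{(\tau+1)}$, and the two iterations are shown to coincide term by term. Invoking Lemma \ref{lem3}(2) yields $P^{(\tau)}\to P^{\ast}$ and $L^{(\tau)}\to L^{\ast}$, and the entrywise formulas of \eqref{eq57} then force $H^{(\tau)}\to H^{\ast}$, which is the claim.

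The step I expect to be the main obstacle is the well-posedness of the policy-evaluation equation \eqref{on_pi_q_1}: a priori it only constrains the quadratic form of $H^{(\tau)}$ along the realized data $\begin{bmatrix} x_k \\ L^{(\tau)}x_k \end{bmatrix}$, so I must argue that it pins down a unique symmetric kernel of the required block structure. The resolution is to lean on admissibility, which by Lemma \ref{lem_qua} guarantees existence and uniqueness of the associated value-function kernel $P^{(\tau)}$, together with the fact that \eqref{on_pi_q_1} is by construction the Q-function Bellman equation whose unique solution is the kernel \eqref{eq57}. Some care is also needed to confirm that the trace correction $-{\rm tr}\big(H^{(\tau)}\begin{bmatrix} I \\ L^{(\tau)} \end{bmatrix} D \begin{bmatrix} I \\ L^{(\tau)} \end{bmatrix}^{\top}\big)$ correctly plays the role of $\lambda$ and absorbs the constant $\bar{s}$, which follows from Lemma \ref{lem2} and \eqref{eq60}.
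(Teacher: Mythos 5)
Your proof is correct and follows essentially the same route as the paper's: substitute $u_k^{(\tau)}=L^{(\tau)}x_k$ into \eqref{on_pi_q_1}, use \eqref{eq60} to collapse it onto the model-based Lyapunov recursion \eqref{lem3_1} (with \eqref{on_pi_q_2} matching \eqref{lem3_2} via \eqref{eq57} and \eqref{eq59}), and then inherit convergence of $P^{(\tau)}$, $L^{(\tau)}$, and hence $H^{(\tau)}$ from Lemma \ref{lem3}. Your explicit induction on admissibility and your flagging of the well-posedness issue --- that on-policy data of the form $[x_k^\top~(L^{(\tau)}x_k)^\top]^\top$ only pin down $P^{(\tau)}$ rather than the full kernel $H^{(\tau)}$, so one must interpret the solution of \eqref{on_pi_q_1} as the structured Q-function kernel --- are actually more careful than the paper's proof, which leaves both points implicit.
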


\begin{proof}
Using $c(x_k,u^{(\tau)}_k)=x_k^{\top}Qx_k+(u^{(\tau)}_k)^{\top}Ru^{(\tau)}_k$
and $u_k^{(\tau)}=L^{(\tau)}x_k$, equation \eqref{on_pi_q_1} becomes
\begin{align}
\nonumber {\rm E}&
\big(x_k^{\top} \begin{bmatrix}I\\
                        L^{(\tau)}
         \end{bmatrix}^{\top}
H^{(\tau)}
         \begin{bmatrix}  I\\
                          L^{(\tau)}
         \end{bmatrix} x_k\big)
         ={\rm E} (x_k^{\top}Qx_k+(u^{(\tau)}_k)^{\top}Ru^{(\tau)}_k)\\
\nonumber&\quad- {\rm {tr}}\big( H^{(\tau)}\begin{bmatrix}I\\
                   L^{(\tau)}
    \end{bmatrix} D
    \begin{bmatrix}I\\
                   L^{(\tau)}
    \end{bmatrix}^{\top}\big)\\
\nonumber &\quad + {\rm E}\big(
x_{k+1}^{\top} \begin{bmatrix}I\\
                        L^{(\tau)}
         \end{bmatrix}^{\top}
H^{(\tau)}
         \begin{bmatrix}  I\\
                          L^{(\tau)}
         \end{bmatrix} x_{k+1}\big) .
\end{align}
The above equation can be written as
\begin{align}\label{lem9_2}
\nonumber {\rm E}(x_k^{\top}P^{(\tau)}x_k)
&={\rm E} (x_k^{\top}Qx_k+(u^{(\tau)}_k)^{\top}Ru^{(\tau)}_k)- {\rm {tr}}( P^{(\tau)}D)\\
&\quad + {\rm E}(x_{k+1}^{\top}P^{(\tau)}x_{k+1})
\end{align}
according to equation \eqref{eq60}.
Substituting system \eqref{sys1} and $u_k^{(\tau)}=L^{(\tau)}x_k$ into \eqref{lem9_2}, one gets
\begin{align}
\nonumber &{\rm E}(x_k^{\top}P^{(\tau)}x_k)\\
\nonumber&={\rm E}\Big(x_k^{\top}\big(
 (A+BL^{(\tau)})^{\top}P^{(\tau)}(A+BL^{(\tau)})+\sum_{i=1}^p\bar{\alpha}_iA_i^{\top}PA_i\\
& +(L^{(\tau)})^{\top}(\sum_{j=1}^q\bar{\beta}_j B_j^\top
 PB_j+R)L^{(\tau)}+Q\big)x_k\Big).
\nonumber\end{align}
From Remark \ref{rem_Exx}, one knows that ${\rm E}(x_k x_k^\top)>0$. Hence, we  conclude that equation \eqref{on_pi_q_1} is equivalent to equation \eqref{lem3_1}.
Moreover, the equation \eqref{on_pi_q_2} is equivalent to equation \eqref{lem3_2} according to equations \eqref{eq57} and \eqref{eq59}.
From Lemma \ref{lem3}, we know that $\underset{\tau\rightarrow\infty}{\lim} P^{(\tau)}=P^{\ast}$ and $\underset{\tau\rightarrow\infty}{\lim} L^{(\tau)}=L^{\ast}$. Hence, we conclude that $\underset{\tau\rightarrow\infty}{\lim} H^{(\tau)}=H^{\ast}$ and $\underset{\tau\rightarrow\infty}{\lim} L^{(\tau)}=L^{\ast}$.
This completes the proof.
\end{proof}

\begin{rem}\label{rem5}
Compared with the iterative scheme in
Lemma \ref{lem3}, the scheme in Lemma \ref{lem4} evaluates the iterative matrix $H^{(\tau)}$ and updates the control gain in an online manner only using system states and control inputs without requiring any knowledge of the system matrices.
\end{rem}

\section{implementation of online model-free RL  Scheme}
In this section, recursive least squares (RLS) \cite{goodwin2014adaptive}
is leveraged to estimate the kernel matrix $H^{(\tau)}$ of Q-function.
The implementation of the model-free learning scheme in Lemma \ref{lem4} is given in Algorithm \ref{A2}.




Vectorize the equation \eqref{on_pi_q_1} as
\begin{align}
\nonumber\Big({\rm E}
\big(\phi(z^{(\tau)}_k)\big)\Big)^{\top}& {\rm {vecs}}(H^{(\tau)})=
 \Big({\rm E}
\big(\phi(z^{(\tau)}_{k+1})\big)\Big)^{\top}{\rm {vecs}}(H^{(\tau)})\\
\nonumber &\quad +{\rm E}\big(c(z^{(\tau)}_k)\big)-
\big( {\rm {vech}}(\kappa^{(\tau)})\big)^\top{\rm {vecs}}(H^{(\tau)}),
\end{align}
where
\begin{align}
 \nonumber  z^{(\tau)}_k=[x_k^{\top}~(u^{(\tau)}_k)^{\top}]^{\top}\in \mathbb{R}^{n+m=r} &,~ \phi(z^{(\tau)}_k)={\rm {vech}}\big(z^{(\tau)}_k(z^{(\tau)}_k)^{\top}\big),
\end{align}
 and
\begin{equation}
\nonumber \kappa^{(\tau)}=\begin{bmatrix}I\\
                   L^{(\tau)}
    \end{bmatrix} D
    \begin{bmatrix}I\\
                   L^{(\tau)}
    \end{bmatrix}^{\top}.
\end{equation}

The iterative kernel matrix $H^{(\tau)}$ of Q-function is estimated using data which are generated under the admissible control policy $u_k=L^{(\tau)}x_k$ for $N$ time steps.
Ignoring the expectation operation, the general bath least squares (BLS) estimator of $H^{(\tau)}$ is given by  \cite{5152964}:
\begin{align}\label{eq73}
\nonumber {\rm {vecs}}(H^{(\tau)})&=\big((\Phi^{(\tau)})^{\top}(\Phi^{(\tau)}- \bar{\Phi}^{(\tau)}+ K^{(\tau)})\big)^{-1}\\
&\quad \times(\Phi^{(\tau)})^{\top}\Upsilon^{(\tau)},
\end{align}
where $\Phi^{(\tau)},~\bar{\Phi}^{(\tau)}\in \mathbb{R}^{N\times \frac{r(r+1)}{2}}$ and  $\Upsilon^{(\tau)}\in \mathbb{R}^{N}$ are the data matrices constructed by
\begin{align}
\nonumber &\Phi^{(\tau)}=\begin{bmatrix}
\phi(z^{(\tau)}_0)&\phi(z^{(\tau)}_1)&\cdots&\phi(z^{(\tau)}_{N-1})
\end{bmatrix}^{\top}\\
\nonumber &\bar{\Phi}^{(\tau)}=\begin{bmatrix}
\phi(z^{(\tau)}_1)&\phi(z^{(\tau)}_2)&\cdots&\phi(z^{(\tau)}_{N})
\end{bmatrix}^{\top}\\
\nonumber &\Upsilon^{(\tau)}=\begin{bmatrix}
c(z^{(\tau)}_0)&c(z^{(\tau)}_1)&\cdots&c(z^{(\tau)}_{N-1})
\end{bmatrix}^{\top},
\end{align}
and $K^{(\tau)}$ is a $N\times \frac{r(r+1)}{2}$ matrix whose rows are vectors
${\rm {vech}}(\kappa^{(\tau)})$.
Note that control input $u^{(\tau)}_k$ is dependent on system state $x_k$ linearly.
Generally, it is necessary to add a probing noise to control input $u^{(\tau)}_k$ to guarantee that the persistency of excitation condition holds \cite{bucsoniu2018reinforcement}.

\begin{rem}
If the additive noise covariance matrix $D$ is unknown in practice, one can leverage the empirical average cost $\bar{\lambda}=\frac{1}{N}\sum_{k=0}^{N-1}c(z_k^{(\tau)})$ to approximate  the average expected cost $\lambda^{(\tau)}\triangleq {\rm {tr}}( P^{(\tau)}D)=\big( {\rm {vech}}(\kappa^{(\tau)})\big)^\top{\rm {vecs}}(H^{(\tau)})$. In this case, the BLS estimator of $H^{(\tau)}$ is given as following
\begin{align}
\nonumber {\rm {vecs}}(H^{(\tau)})&=\big((\Phi^{(\tau)})^{\top}(\Phi^{(\tau)}- \bar{\Phi}^{(\tau)})\big)^{-1}
(\Phi^{(\tau)})^{\top}(\Upsilon^{(\tau)}-\bar{\lambda}\underline{\textbf{1}}).
\end{align}
\end{rem}

However, the condition of full rank in \eqref{eq73} may not be satisfied until a sufficient number of states and inputs has been collected.
Moreover, with the dimension of state and input increasing, the inverse operation has higher computational complexity and lower accuracy.
According to the derivation in \cite[Section 6]{lagoudakis2003least}, we use RLS to compute the inverses recursively, which is more time efficient.
Finally, the implementation of the model-free learning scheme is given in Algorithm 1.

\begin{algorithm}[htb]
\caption{Online Model-Free RL} 
\label{A2}
\hspace*{0.02in} {\bf Input:} 
Admissible control gain $L^{(0)}$, initial state covariance matrix $X_0$, additive noise covariance matrix $D$, roll out length $N$, variance $\sigma_u^2$, large positive constant $\varpi$, maximum number of iterations $\tau_{max}$,  convergence tolerance $\varepsilon$\\
\hspace*{0.02in} {\bf Output:} 
The estimated optimal control gain $\hat{L}$
\begin{algorithmic}[1]
\For{$\tau=0:\tau_{max}$} 
    \State \textbf{Policy Evaluation:}
¡¡¡¡\State  Sample $x_0$ from a Gaussian distribution with zeros
    \State  mean and covariance $X_0$. Let $\xi_0=\varpi I$, $\psi_0=0$
    \For{$k=0:N-1$}
    \State Input $u_k=L^{(\tau)}x_k+e_k$ into system \eqref{sys1} to obtain \State $x_{k+1}$, where $e_k$ sampling from  a
     Gaussian distri-
    \State bution with zeros mean and covariance $\sigma_u^2$. Then \State$u_{k+1}=L^{(\tau)}x_{k+1}$
    \State $
\xi_{k+1} = \xi_k-\frac{\xi_k\phi(z_k^{(\tau)})\big( \phi(z_k^{(\tau)}) -\phi(z_{k+1}^{(\tau)})+\kappa^{(\tau)}\big)^\top\xi_k}{1+\big( \phi(z_k^{(\tau)}) -\phi(z_{k+1}^{(\tau)})+\kappa^{(\tau)}\big)^\top\xi_k\phi(z_k^{(\tau)})}
    $
    \State $\psi_{k+1}=\psi_k+\phi(z_k^{(\tau)})c(x_k,u^{(\tau)}_k)$
    \EndFor\State\textbf{endfor}
    \State ${\rm {vecs}}(H^{(\tau)})=\xi_N \psi_N$
    \State \textbf{Policy Improvement:}
    \State $L^{(\tau+1)}=-(H^{(\tau)}_{uu})^{-1}H^{(\tau)}_{ux}$
¡¡¡¡\If{$\|L^{(\tau+1)}-L^{(\tau)}\|<\varepsilon$}
        \State Break
¡¡¡¡\EndIf\State\textbf{endif}
\EndFor\State\textbf{endfor}
\State $\hat{L}=L^{(\tau+1)}$
\end{algorithmic}
\end{algorithm}

\begin{rem}\label{rem8}
In \cite{WANG20181}, the system matrices were partially needed in the implementation of the value iteration algorithm.
In this paper, we have no requirement of the knowledge of the system matrices.
In \cite{JZP2016noise} and \cite{ZHANG20201}, the authors assumed that the terms with noises were measurable.
Here we remove this assumption.
Furthermore, the need for judicious selection the discount factor has been avoid, which is necessary in \cite{7574391,8122054}.
\end{rem}

\section{numerical example}
In this section, a numerical example is presented to evaluate the proposed method. Consider the following open-loop asymptotically square stationary linear discrete-time system:
\begin{align}\label{exam}
\nonumber x_{k+1}&=\begin{bmatrix}
                        0.8672 & 0.0519 & 0.1028\\
                        0.0519 & 0.7576 & 0.0475\\
                        0.1028 & 0.0475 & 0.7681
                    \end{bmatrix}x_k+I_3 u_k\\
\nonumber &~ + \alpha_{1k}
                        \begin{bmatrix}
                        0 & -1 & 0\\
                        -1 & 0 & 0\\
                        0 & 0 & 0
                        \end{bmatrix}x_k
                        + \alpha_{2k}
                        \begin{bmatrix}
                        0 & 0 & -1\\
                        0 & 0 & 0\\
                        -1 & 0 & 0
                        \end{bmatrix}x_k\\
\nonumber &~ +  \beta_{1k}
                        \begin{bmatrix}
                        1 & 0 & 0\\
                        0 & 0 & 0\\
                        0 & 0 & 0
                        \end{bmatrix}u_k
                        + \beta_{2k}
                        \begin{bmatrix}
                        0 & 0 & 0\\
                        0 & 1 & 0\\
                        0 & 0 & 0
                        \end{bmatrix}u_k
                        +d_k,
\end{align}
where $\bar{\alpha}_1=\bar{\beta}_1=0.05$, $\bar{\alpha}_2=\bar{\beta}_2=0.015$.
Let initial state variance matrix $X_0=I$, additive noise covariance matrix $D=0.5I$, and the weight matrices are selected as $Q=R=I$.
The exact solution to SARE \eqref{lem_op_2} is
\begin{align}
\nonumber P^{\ast}=  \begin{bmatrix}
                      1.5864  &  0.0673 &   0.1208\\
                      0.0673  &  1.4252 &   0.0528\\                      0.1208  &  0.0528 &   1.3770
                     \end{bmatrix}
\end{align}
and the optimal control gain is
\begin{align}
\nonumber L^{\ast}=\begin{bmatrix}
                    -0.5175 &  -0.0394 &  -0.0761\\
                    -0.0404 &  -0.4419 &  -0.0353\\
                    -0.0776 &  -0.0352 &  -0.4466
                     \end{bmatrix}.
\end{align}
Thus, one can obtain the optimal average cost $\lambda^\ast=  2.1943$ according to Lemma \ref{lem_op}.

Perform Algorithm \ref{A2} on system \eqref{sys1} with parameters:
$L^{(0)}=0$, $N=42000$, $\sigma_u^2=0.64$, $\varpi=10^8$,
$\tau_{max}=10$, and $\varepsilon=0.05$.
Algorithm \ref{A2} stops after four iterations and returns the estimated optimal control gain
 \begin{align}
\nonumber \hat{L}=
 \begin{bmatrix}
     -0.5192 &  -0.0443 &  -0.0795\\
   -0.0342  & -0.4437  & -0.0336\\
   -0.0747  & -0.0371  & -0.4497
 \end{bmatrix}
 \end{align}
  and  the estimated optimal average cost $\hat{\lambda}=2.2159$.

To make comparison, evaluate the following algorithms on system \eqref{sys1}:
(a) Algorithm 1 in this paper;
(b) MFLQv3 in \cite{pmlr-v89-abbasi-yadkori19a}, where the authors evaluated the value function  first and then Q-function. The iterative control policies were greedy with respect to the average of all previous Q-function estimations $\hat{Q}_1,\ldots,\hat{Q}_{i-1}$;
(c) Q-learning in \cite{bradtke1994adaptive}, where the kernel matrix $H^{(i)}$ was learned based on Bellman residual methods and RLS.

Instead of using the stop condition $\varepsilon$, we run the above three algorithms for 10 iterations.
The remaining parameter settings keep unchanged, except that for MFLQv3 in \cite{pmlr-v89-abbasi-yadkori19a}, in a single iteration, we use $21000$ and $21000$ time steps data to estimate the value function and Q-function, respectively.
The curves of $\|L^{(\tau)}-L^{\ast} \|$ and $\frac{|\lambda^{(\tau)}-\lambda^\ast|}{\lambda^\ast}$ are shown in Fig. 1 and Fig. 2, respectively.
From Fig 1 we can see that the control gain obtained by Algorithm \ref{A2} is closer to the optimal control gain than the other two algorithms.
Furthermore, Fig.2 shows that Algorithm \ref{A2} achieves much lower relative cost error.

\begin{figure}[hbt]
  \centering
  \includegraphics[width=.30\textwidth]{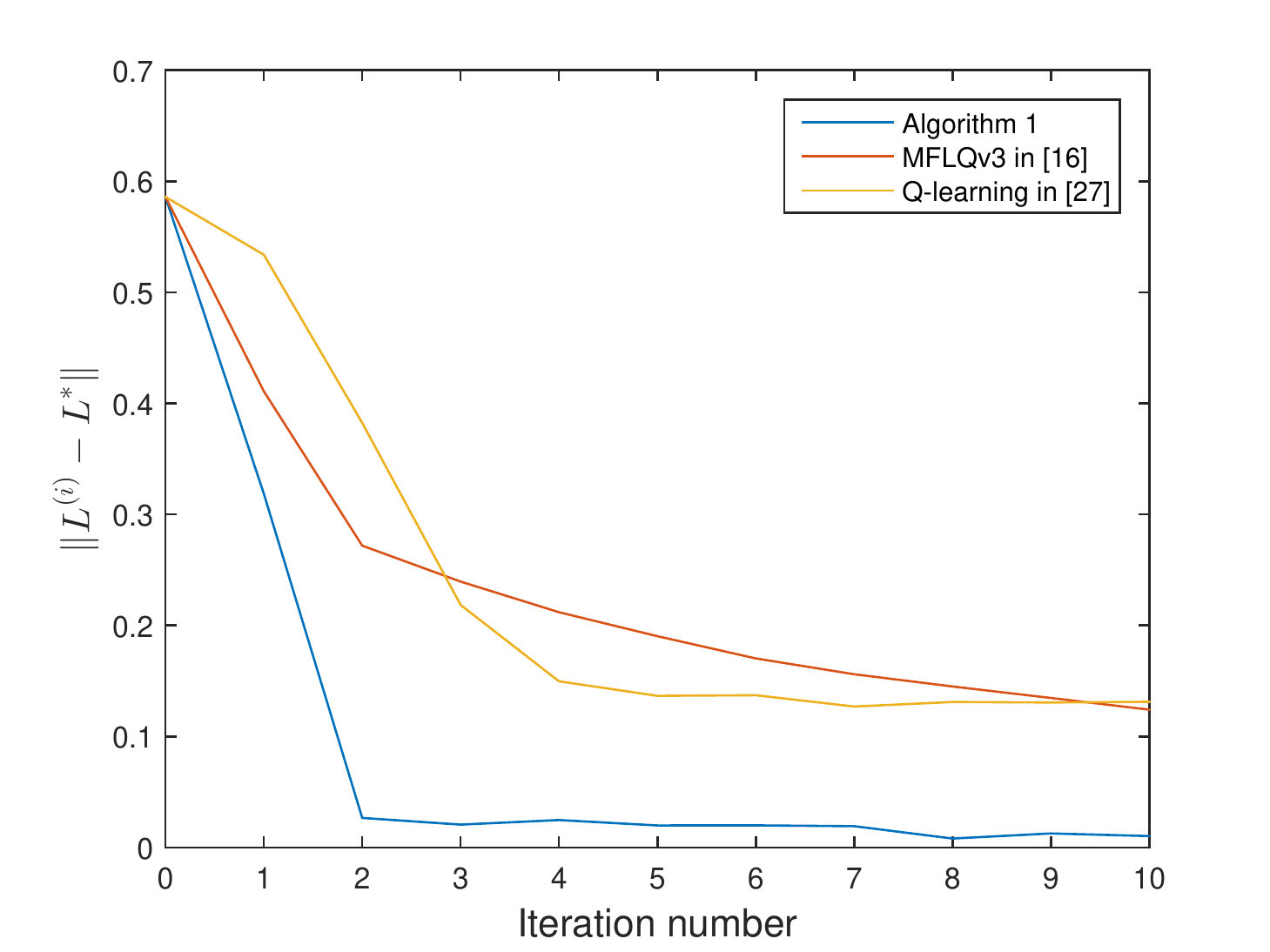}
  \caption{ The distance $\|L^{(\tau)}-L^{\ast}\|$ }
\end{figure}

\begin{figure}[hbt]
  \centering
  \includegraphics[width=.30\textwidth]{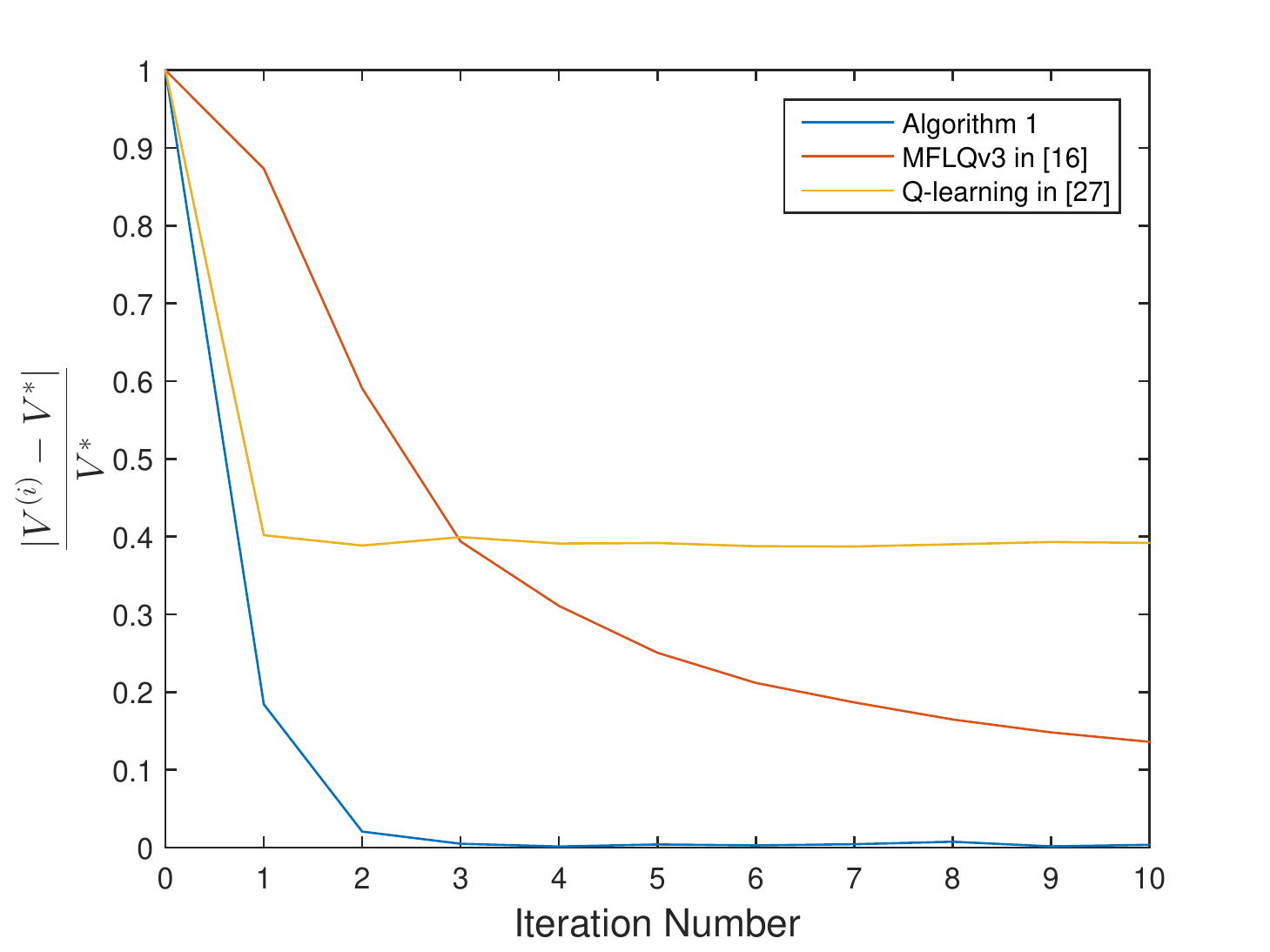}
  \caption{ Relative cost error $\frac{\|\lambda^{(\tau)}-\lambda^\ast\|}{\|\lambda^\ast\|}$}
\end{figure}

\section{conclusion}
This paper investigates the average cost optimal control problem for a class of discrete-time stochastic systems.
The system under consideration suffers from both multiplicative and additive noises.
Both model-based and model-free schemes are proposed to solve the stochastic LQR. The control policies and associated kernel matrices obtained from the two schemes are proved to converge to the optimal ones.
An model-free RL algorithm is presented to learn the optimal control policy in an online manner using the data of the system states and control inputs.
The proposed approach is illustrated through a numerical example.

\bibliographystyle{IEEEtranDoi}
\bibliography{mybib}

\end{document}